%% file: main.tex
\documentclass[11pt,a4paper]{article}

\usepackage[utf8]{inputenc}
\usepackage[T1]{fontenc}
\usepackage{amsmath,amssymb,amsthm}
\usepackage{algorithm}
\usepackage{algpseudocode}
\usepackage{booktabs}
\usepackage{graphicx}
\usepackage{hyperref}
\usepackage{cleveref}
\usepackage{xcolor}
\usepackage{listings}
\usepackage{tikz}
\usetikzlibrary{arrows.meta,positioning,shapes.geometric,fit}
\usepackage{subcaption}
\usepackage[numbers,sort&compress]{natbib}
\usepackage{geometry}
\newtheorem{theorem}{Theorem}
\newtheorem{definition}{Definition}
\newtheorem{lemma}{Lemma}
\newtheorem{corollary}{Corollary}
\newtheorem{property}{Property}

\newcommand{\cmark}{\textcolor{green!60!black}{\checkmark}}
\newcommand{\xmark}{\textcolor{red!70!black}{$\times$}}

\newcommand{\armname}{ARM}

\title{Causality Laundering: Denial-Feedback Leakage in Tool-Calling LLM Agents}

\author{
  Mohammad Hossein Chinaei\thanks{Corresponding author. Email: \href{mailto:mohammadh.chinaei@gmail.com}{mohammadh.chinaei@gmail.com}. LinkedIn: \url{https://www.linkedin.com/in/mohammadhossein-chinaei/}}\\
  Independent Researcher
}

\date{April 2026}

\begin{document}

\maketitle

\begin{abstract}
\input{sections/00-abstract}
\end{abstract}

\section{Introduction}
\label{sec:introduction}
\input{sections/01-introduction}

\section{Threat Model}
\label{sec:threat-model}
\input{sections/02-threat-model}

\section{Causality Laundering}
\label{sec:causality-laundering}
\input{sections/03-causality-laundering}

\section{ARM Architecture}
\label{sec:architecture}
\input{sections/04-architecture}

\section{Provenance Graph}
\label{sec:provenance-graph}
\input{sections/05-provenance-graph}

\section{Implementation}
\label{sec:implementation}
\input{sections/06-implementation}

\section{Evaluation}
\label{sec:evaluation}
\input{sections/07-evaluation}

\section{Related Work}
\label{sec:related-work}
\input{sections/08-related-work}

\section{Discussion and Future Work}
\label{sec:discussion}
\input{sections/09-discussion}

\section{Conclusion}
\label{sec:conclusion}
\input{sections/10-conclusion}

\bibliographystyle{plainnat}
\bibliography{references}

\appendix
\section{Formal Proofs}
\label{app:proofs}
\input{sections/appendix-proofs}

\end{document}

%% file: sections/00-abstract.tex
Tool-calling LLM agents can read private data, invoke external services, and
trigger real-world actions, creating a security problem at the point of tool
execution.  We identify a denial-feedback leakage pattern, which we term
\emph{causality laundering}, in which an adversary probes a protected action,
learns from the denial outcome, and exfiltrates the inferred information through
a later seemingly benign tool call.  This attack is not captured by flat
provenance tracking alone because the leaked information arises from causal
influence of the denied action, not direct data flow.

We present the \emph{Agentic Reference Monitor} (\armname{}), a runtime
enforcement layer that mediates every tool invocation by consulting a provenance
graph over tool calls, returned data, field-level provenance, and denied actions.
\armname{} propagates trust through an integrity lattice and augments the graph
with counterfactual edges from denied-action nodes, enabling enforcement over both
transitive data dependencies and denial-induced causal influence.  In a controlled
evaluation on three representative attack scenarios, \armname{} blocks causality
laundering, transitive taint propagation, and mixed-provenance field misuse that
a flat provenance baseline misses, while adding sub-millisecond policy evaluation
overhead.  These results suggest that denial-aware causal provenance is a useful
abstraction for securing tool-calling agent systems.

%% file: sections/01-introduction.tex
Tool-calling LLM agents increasingly interact with external systems by invoking
tools that read data, call services, and trigger consequential side
effects~\citep{schick2023toolformer,patil2023gorilla}.  A single agent session may
query a user's inbox, retrieve financial records, draft a document, and send it to
a third party, all mediated by protocols such as MCP~\citep{anthropic2024mcp} that
expose a growing catalog of callable tools.  The critical security boundary in such
systems is not text generation itself, but tool execution: every tool call is an
access decision, and every tool result imports data from a source the agent does
not control.  Securing these systems therefore requires principled enforcement at
the tool boundary, rather than relying solely on post hoc detection or filtering.

Recent work has made real progress on runtime enforcement for tool-calling agents.
Information-flow-control systems such as FIDES~\citep{costa2025fides} enforce
label-based restrictions over observed flows.  Secure-by-design architectures
such as CaMeL~\citep{debenedetti2025defeating} extract trusted control and data
flow from the user query and enforce capability-style restrictions.  Graph-based
approaches such as PCAS~\citep{palumbo2026pcas} construct dependency graphs over
tool-call events and evaluate declarative policies against them.  Causal attribution
defenses~\citep{zhang2026agentsentry,kim2026causalarmor} use counterfactual
re-execution to determine whether a proposed action is driven by user intent or by
injected content.  These systems represent genuine progress; collectively, they show
that principled runtime enforcement for agents is both feasible and necessary.
However, they primarily model successful executions, returned data, and observed
action traces, rather than explicitly representing denied actions as events with
potential downstream causal influence.

A blocked action is not always the end of an attack.  Consider an agent that, under
adversarial influence, attempts to read a user's salary record.  The enforcement
layer denies the call, and no sensitive bytes are returned.  Yet the denial itself
may still reveal security-relevant information: the agent can infer that such a
record exists, that it is protected, or, if the denial includes a reason, that the
record belongs to a particular category of sensitive data.  A later tool call that
mentions ``compensation'' to an external party can therefore leak information that
was never directly read from the protected resource.  The leakage arises from the
denial outcome, not from any successful data transfer.  Flat taint tracking does not
capture this behavior because no sensitive value ever enters the ordinary tool-result
channel, and dependency graphs over successful executions contain no direct data-flow
edge from the blocked action to the later exfiltration step.  Existing causal
attribution checks may also miss this pattern, because the triggering influence is
the denial event itself rather than untrusted content returned by the tool.  Recent
empirical evidence supports the plausibility of this channel: frontier LLM agents
have been shown to infer hidden monitoring purely from blocking feedback, even when
they are not explicitly told that monitoring is
present~\citep{jiralerspong2026noticing}.

We term this pattern \emph{causality laundering}: an adversary probes a protected
action, learns from the denial outcome, and exfiltrates the inferred information
through a later seemingly benign tool call.  Existing provenance models over
successful executions do not explicitly record this dependency, because the
relevant influence arises from the denial event rather than from returned data.
Causality laundering is a
specific instance of a broader class of denial-feedback implicit flows in
tool-calling agents.  To our knowledge, existing defenses do not explicitly
represent denied actions as first-class provenance events and enforce over their
downstream influence.

To address this gap, we present the \emph{Agentic Reference Monitor} (\armname{}),
a provenance-aware runtime enforcement layer that treats denied actions as
first-class events in the execution graph.  When a tool call is denied,
\armname{} records the denial as a node in the provenance graph and introduces
counterfactual edges to subsequent actions that may have been causally influenced
by it.  Trust then propagates through an integrity lattice over transitive data
dependencies, field-level provenance, and denial-induced counterfactual paths,
allowing downstream actions shaped by a denial to inherit the security-relevant
context of the denied event.  All policy decisions are computed by deterministic
graph traversals and explicit rules, rather than delegated back to the LLM under
scrutiny.

\paragraph{Contributions.}  Our contributions are threefold:

\begin{enumerate}
  \item We identify and formalize \emph{causality laundering}, a denial-feedback
    leakage pattern in which an adversary probes a protected action, infers
    security-relevant information from the denial outcome, and exfiltrates it
    through a subsequent tool call (\Cref{sec:causality-laundering}).

  \item We present \armname{}, a provenance-aware runtime enforcement layer that
    explicitly represents denied actions, counterfactual dependencies, transitive
    data dependencies, and field-level provenance, and enforces policy through
    deterministic graph traversal over an integrity lattice
    (\Cref{sec:architecture,sec:provenance-graph}).

  \item We evaluate \armname{} on three representative attack scenarios and show
    that the graph-aware engine blocks all three, namely causality laundering,
    transitive taint propagation, and mixed-provenance field misuse, while a flat
    provenance baseline misses all three, with sub-millisecond evaluation overhead
    (\Cref{sec:evaluation}).
\end{enumerate}

%% file: sections/02-threat-model.tex
\subsection{System and Trust Boundary}

We consider a deployment in which a single LLM agent interacts with the external
world exclusively through tool calls.  \armname{} mediates this boundary: every
attempted tool invocation is evaluated before execution, and every tool result is
imported into the agent context only after policy evaluation and provenance
recording.  The primary security-relevant actions in scope are reads from protected
resources, writes or updates to external systems, and outbound communication to
potentially adversarial sinks.

The agent interacts with external systems through a set
of tools exposed via the Model Context Protocol (MCP) or an equivalent
tool-calling interface.  The agent receives a system prompt from a trusted
operator, user messages from a trusted end user, and tool outputs from
external services of varying trustworthiness.

The enforcement point is the tool-call boundary: \armname{} interposes on every
tool invocation before execution and on every tool result before it enters the
agent's context.  The protected assets include confidential records (e.g.,
personnel files, credentials, financial data), and the privileged sinks include
outbound communication (e.g., \texttt{send\_email}, \texttt{post\_message}),
file mutation (e.g., \texttt{write\_file}, \texttt{delete\_file}), external API
calls (e.g., \texttt{transfer\_funds}, \texttt{update\_record}), and code
execution (e.g., \texttt{run\_code}).  A security violation occurs when a privileged sink is reached by an action whose
provenance includes either untrusted data or denial-induced causal influence,
without explicit policy authorization.

\subsection{Attacker Model}

The attacker acts indirectly by controlling content that the agent observes, not
by modifying \armname{} itself.  The attacker's goal is to cause the agent to
execute unauthorized tool calls or exfiltrate sensitive data through tool-call
arguments.  We assume the attacker \emph{cannot} modify the agent's code, the
\armname{} enforcement layer, or the system prompt.  The attacker \emph{can}:

\begin{enumerate}
  \item \textbf{Inject adversarial content into tool outputs.}  A compromised or
    adversary-controlled tool returns data containing prompt injection
    payloads~\citep{greshake2023youve}.  This is the dominant real-world attack
    vector: poisoned emails, malicious web content, and tampered API responses.

  \item \textbf{Craft multi-step attack chains.}  The attacker's payload may not
    exfiltrate data in a single tool call.  Instead, it may cause the agent to
    (a)~read sensitive data, (b)~transform it through intermediate tool calls to
    launder its provenance, and (c)~exfiltrate the result through a seemingly
    benign tool call.

  \item \textbf{Exploit denial signals.}  When \armname{} denies a tool call, the
    agent observes the denial.  A prompt-injected agent may infer information from
    the denial itself (e.g., ``the file exists because the read was denied on
    permissions, not on file-not-found'') and exfiltrate this inference.

  \item \textbf{Manipulate tool descriptions.}  In MCP-based deployments, tool
    descriptions are metadata provided by the tool server and may be modified by
    a compromised or malicious server~\citep{invariantlabs2024mcp}.  A manipulated
    description can influence the agent's tool selection and argument construction.
\end{enumerate}

\subsection{Trust Sources}

The system model above distinguishes trusted inputs (system prompt, user messages)
from tool outputs of varying trustworthiness.  We formalize this distinction as a
total order over data sources:

\begin{definition}[Integrity Lattice]
\label{def:integrity-lattice}
Let $\mathcal{T}$ be the set of five trust levels with total order:
\[
  \textnormal{\textsc{ToolDesc}} <_\mathcal{T}
  \textnormal{\textsc{ToolUntrusted}} <_\mathcal{T}
  \textnormal{\textsc{ToolTrusted}} <_\mathcal{T}
  \textnormal{\textsc{UserInput}} <_\mathcal{T}
  \textnormal{\textsc{SysInstr}}
\]
The trust level of a data item derived from multiple sources is the minimum of
its sources' trust levels (conservative join).
\end{definition}

This lattice reflects operational reality: system instructions are the most
trusted (set by the operator), user input is trusted but less authoritative, and
tool outputs range from trusted (e.g., a signed API response) to untrusted
(e.g., raw web scrape content).  Tool descriptions---metadata provided by MCP
servers---are the least trusted, as they can be manipulated in tool-poisoning
attacks~\citep{invariantlabs2024mcp}.

\subsection{Security Goals}
\label{sec:security-goals}

In practical terms, \armname{} aims to ensure that (1)~no tool call influenced by
untrusted data or denial-induced causal influence reaches a privileged sink
without an explicit policy decision,
(2)~every enforcement decision is recorded in a way that resists after-the-fact
tampering, and (3)~no single enforcement check is a single point of failure.  We
formalize these as three properties:

\begin{enumerate}
  \item \textbf{Mediated Integrity.}  No causal path exists from an untrusted
    source to an \textsc{Allow} verdict on a privileged tool call without
    traversing at least one enforcement check (\Cref{thm:mediated-integrity}).

  \item \textbf{Tamper-Evident Audit.}  Every enforcement decision is recorded in a
    hash-chained audit log.  Tampering with any entry invalidates the chain.

  \item \textbf{Defense in Depth.}  Multiple independent enforcement layers operate
    in series; an attacker must simultaneously evade all layers to succeed.
\end{enumerate}

\subsection{Scope and Non-Goals}

\armname{} enforces \emph{integrity} properties: it prevents untrusted data from
triggering unauthorized actions.  It provides \emph{explicit secrecy} (blocking
direct data flows to unauthorized sinks) but does \emph{not} claim full
non-interference for implicit flows (e.g., timing channels, control-flow
inference).  Full implicit-flow tracking is impractical in LLM-based systems where
the model's internal reasoning is opaque.  However, \armname{}'s counterfactual
edges (\Cref{sec:causality-laundering}) mitigate a significant class of
inference-based attacks that would otherwise be invisible.

We treat malicious-user attacks as out of scope and focus on indirect compromise
through untrusted tool-observed content.  We consider a single-agent setting in
this paper.  Multi-agent extensions (delegation tokens, vector clocks, composition
checking) are discussed in \Cref{sec:discussion} and deferred to future work.

%% file: sections/03-causality-laundering.tex

Runtime enforcement systems for tool-calling agents often adopt a common implicit
assumption: once a tool call is denied, the threat has been neutralized.  The
blocked action does not execute, no sensitive data is returned, and the system
proceeds as though the attack has ended.  We argue that this assumption is
incorrect.  A denied action is not merely a failed operation; it is an observable
event whose outcome can influence the agent's subsequent behavior.  Denial
feedback can therefore carry security-relevant information that is not captured
by provenance models over successful executions alone.  This section introduces
\emph{causality laundering}, a denial-feedback leakage pattern that exploits
this gap, and develops the intuition and formalism needed to detect it.

\subsection{Motivating Example}
\label{sec:cl-example}

Consider an agent with access to \texttt{read\_file} and \texttt{send\_email}
tools.  A prompt injection payload instructs the agent to exfiltrate the contents
of \texttt{/etc/shadow}.  A policy engine correctly denies the
\texttt{read\_file("/etc/shadow")} call because the path matches a sensitive
filesystem pattern.

A flat taint-tracking system considers the attack blocked: no sensitive data
entered the agent's context.  However, the agent can still infer that the file
\emph{exists} if the denial reason is ``permission denied'' rather than ``file
not found.''  The injected payload then instructs the agent to call
\texttt{send\_email(body="shadow file exists on this host")}.  From the
perspective of flat provenance or taint-based enforcement, this second call
appears benign: the payload is short, contains no credential patterns, matches
no sensitive data returned by any tool, and has no data-flow lineage connecting
it to the denied read.  Yet it conveys information derived from the denied
action.  Blocking the first step did not eliminate the attack; it left behind the
information channel through which the second step operates.

\subsection{Why Existing Defenses Miss the Attack}
\label{sec:cl-gap}

Three broad categories of runtime defense fail to detect the attack above, for
structurally related reasons.

\paragraph{Flat taint and information-flow tracking.}
Systems such as FIDES~\citep{costa2025fides} propagate integrity labels through
tool outputs: when a tool call succeeds, its result inherits the trust level of
its inputs.  When a call is denied, no successful output is produced, so no label
is propagated through the ordinary data-flow channel.  The
\texttt{send\_email} call in the motivating example carries no tainted bytes,
and flat taint correctly reports that no protected data flowed through a
successful tool result.  The leak is therefore invisible to this abstraction,
because it does not travel through returned data; it travels through the agent's
observation of the denial outcome.

\paragraph{Dependency graphs over successful executions.}
Graph-based systems such as PCAS~\citep{palumbo2026pcas} construct dependency
graphs over tool-call sequences and evaluate reachability policies against them.
A denied call, however, is not typically represented as a provenance-bearing
dependency node with downstream causal semantics: it produced no successful
output for later nodes to depend on.  As a result, the causal link from the
denied \texttt{read\_file} to the subsequent \texttt{send\_email} is not
captured in a graph that records only successful execution dependencies.

\paragraph{Causal attribution defenses.}
A separate line of work uses causal attribution to detect indirect prompt
injection.  AgentSentry~\citep{zhang2026agentsentry} and
CausalArmor~\citep{kim2026causalarmor} re-execute agent trajectories with and
without suspected inputs to estimate causal influence.  Their ablation targets
are typically attacker-controlled content that appeared in tool
\emph{results}, such as injected text or manipulated API responses.  Denial
feedback, by contrast, is generated by the enforcement layer itself rather than
by an external attacker, and existing attribution pipelines are not designed to
model denied actions as explicit ablation targets with downstream provenance.

\paragraph{The missing abstraction.}
In each case, the denied action disappears from the defense's world model.  The
agent, however, still observes the denial, and its subsequent behavior is
conditioned on that observation.  The missing object is therefore the denial
event as a first-class provenance node: one that participates in downstream
causal chains even though it produced no successful tool output.  \armname{} targets a class of implicit-flow behavior that is not the focus of
FIDES's threat model.  Causality laundering exploits precisely this gap.

\subsection{Formal Definition}
\label{sec:cl-definition}

\begin{definition}[Causality Laundering]
\label{def:causality-laundering}
Let $G = (V, E)$ be the provenance graph of an agent session.  A
\emph{causality laundering attack} occurs when:
\begin{enumerate}
  \item An action $a_d \in V$ is denied by the enforcement layer at time $t_d$.
  \item A subsequent action $a_s \in V$ at time $t_s > t_d$ produces an
    observable side effect (e.g., sends data to an external sink).
  \item No successful data-flow path connects $a_d$ to $a_s$ through returned
    tool outputs.
  \item The occurrence or content of $a_s$ is \emph{causally influenced} by the
    denial of $a_d$; that is, the agent's behavior would differ had $a_d$ not occurred, or had its denial not been observed.
\end{enumerate}
\end{definition}

Clauses~1 and~2 establish the temporal structure: a denied action precedes a
side-effecting action.  Clause~3 is what makes the attack invisible to existing
data-flow defenses: because no successful data-flow path connects the two
actions, systems that track only explicit propagation through tool results will
see no dependency between them.  Clause~4 is the distinguishing condition: the
later action would have been different had the denial not occurred, meaning the
denial itself carried information that shaped subsequent behavior even without a
returned-data path.  Together, clauses~3 and~4 capture the core tension:
the causal influence is real, but the ordinary data-flow evidence is absent.
In \Cref{sec:counterfactual-detection}, \armname{} operationalizes this
definition conservatively by recording denied actions explicitly and linking them
to subsequent actions via counterfactual edges.

\subsection{Relationship to Established Concepts}
\label{sec:cl-established}

Causality laundering is not conceptually isolated; it is the tool-calling-agent
manifestation of several classical ideas in information flow and causality.

\paragraph{Covert channels (Lampson, 1973).}  Lampson's covert channel
taxonomy~\citep{lampson1973note} distinguishes storage channels (shared state)
from timing channels (observable delays).  Causality laundering is a
\emph{storage channel} in which the shared state is the agent's observed
execution history: the denial event is effectively ``stored'' in the agent's
state and later reflected in a subsequent action.

\paragraph{Counterfactual causality (Pearl, 2009).}  In Pearl's structural causal
model framework~\citep{pearl2009causality}, $a_d$ is a cause of $a_s$ if, in the
counterfactual world where $a_d$ did not occur, $a_s$ would not have occurred
(or would have had different content).  Our detection mechanism operationalizes a
conservative approximation of this idea: \armname{} records a
\textsc{Counterfactual} edge from $a_d$ to $a_s$ and queries whether any path
to a sink traverses such an edge.

\paragraph{Implicit information flows (Denning, 1976).}  Denning's lattice model
for information flow~\citep{denning1976lattice} acknowledges that assignments in
conditional branches create implicit flows.  In the agent setting, the LLM's
``conditional branch'' is opaque: we cannot inspect the internal reasoning that
transforms denial feedback into later behavior.  The observable pattern, however, can still be represented and enforced over: a
denied action followed by a suspicious downstream action.

\subsection{Attack Variants}
\label{sec:cl-variants}

We identify three variants of causality laundering, distinguished by how
denial-derived information is inferred, accumulated, and transmitted.

\begin{enumerate}
  \item \textbf{Denial inference exfiltration.}  A single denied probe leaks a
    small but security-relevant fact (e.g., that a file exists), and the agent
    immediately exfiltrates that inference.  This is the motivating example
    above.  The risk is that even one denial can reveal the presence, absence, or
    sensitivity class of a protected resource, which may already be sufficient
    for reconnaissance or targeted follow-up attacks.

  \item \textbf{Multi-probe fingerprinting.}  The attacker issues a sequence of
    probes (e.g., testing multiple file paths or API endpoints), collects a
    vector of allow/deny outcomes, and exfiltrates the aggregate.  Each
    individual denial reveals little in isolation; the sequence reveals much.
    The risk is cumulative: a systematic scan of $n$ resources can leak enough
    denial feedback to fingerprint the system's access-control configuration or
    map protected assets.

  \item \textbf{Laundered composition.}  The attacker inserts an allowed
    intermediate tool between the denial and the final sink-bound action in
    order to obscure the connection.  For example: probe $\rightarrow$ denied
    $\rightarrow$ use a calculator tool to encode the denial status as a number
    $\rightarrow$ include that number in an outbound message.  The risk is that
    the intermediate transformation launders the denial-derived signal into a
    form that breaks simple temporal, textual, or surface-level correlation with
    the original denied action.
\end{enumerate}

\subsection{Detection Intuition}
\label{sec:cl-intuition}

If the problem is that denied actions are absent from the provenance model, the
natural defense is to record them explicitly and propagate their potential
downstream influence.  The challenge is that the path from a denied action to a
later tool call passes through the LLM's internal state, which is opaque to the
enforcement layer.  We therefore cannot directly prove that a particular
subsequent action was caused by a particular denial.  What we \emph{can} observe
is the external pattern: a denial occurred, the agent continued, and later tool
calls were issued under a context that now includes that denial outcome.

\armname{}'s detection strategy accepts this opacity and works with what is
observable.  Rather than attempting to determine \emph{whether} the model's
internal reasoning was influenced, \armname{} records \emph{that} a denial
occurred and conservatively treats subsequent tool calls as potentially
influenced by that event.  This is a deliberate over-approximation: it may flag
benign tool calls that happen to follow a denial, but it makes denial-induced
causal influence explicit in the provenance model instead of leaving it
unrepresented.  The next subsection operationalizes this intuition using
denied-action nodes and counterfactual edges in the provenance graph.

\subsection{Detection via Counterfactual Edges}
\label{sec:counterfactual-detection}

\armname{} detects causality laundering by maintaining \textsc{Counterfactual}
edges in the provenance graph.  When an action is denied, \armname{} records a
\textsc{DeniedAction} node.  The next tool call automatically receives a
\textsc{Counterfactual} edge from that denial node.  This is a temporal
heuristic, chosen because the model's internal reasoning is opaque: we cannot
determine exactly which later actions were influenced by the denial, so we
conservatively mark the temporally adjacent action as potentially influenced.

At evaluation time, \armname{}'s graph-aware provenance layer
(\Cref{sec:provenance-graph}) queries whether the current tool call is reachable
from any denied-action node via a path containing a \textsc{Counterfactual}
edge.  If so, the call is denied with the reason ``causality laundering
detected.''

This mechanism is a conservative approximation of causal influence, not perfect
causal identification.  It may therefore flag benign tool calls that happen to
follow a denial, producing false positives.  We discuss this trade-off and
possible refinements, such as configurable temporal windows and richer
structural conditions, in \Cref{sec:discussion}.  In practice, prompt-injection
exfiltration attempts often follow closely after the denied probe, making this
heuristic effective for the dominant probe-then-exfiltrate pattern.

%% file: sections/04-architecture.tex

\armname{} is an interposition layer that sits between a tool-calling LLM agent
and its upstream tool servers.  Every tool call is mediated by \armname{} before
execution, and every tool result passes back through \armname{} before entering
the agent's context.  The outcome of each mediation is either \textsc{Allow},
meaning the call proceeds, or \textsc{Deny}, meaning the call is blocked and a
denial record is created.  No tool invocation reaches an external system without
first passing through the enforcement pipeline described in this section.
While we describe \armname{} as an MCP proxy for concreteness, the same
mediation pattern applies to equivalent tool-calling runtimes that expose a
centralized execution boundary.

\subsection{Interposition and Execution Flow}
\label{sec:execution-flow}

The end-to-end path for a single tool call proceeds as follows.  The LLM client
issues a \texttt{tool\_use} message.  \armname{}, operating as an MCP proxy
server, receives the message and evaluates it through the layered policy
pipeline (\Cref{sec:pipeline}).  If all layers return \textsc{Pass}, \armname{}
forwards the call to the upstream tool server, receives the result, records the
returned data and its provenance (\Cref{sec:provenance-graph}), and returns the
result to the agent.  If any layer returns \textsc{Deny}, the call is not
forwarded; instead, \armname{} records a \textsc{DeniedAction} node in the
provenance graph and returns an error to the agent.  In both cases, the
enforcement decision and its per-layer results are written to the audit log.

\subsection{Reference Monitor Properties}
\label{sec:refmon-properties}

\armname{} achieves the three classical reference monitor
properties~\citep{anderson1972planning}:

\begin{enumerate}
  \item \textbf{Complete mediation.}  \armname{} operates as the sole MCP proxy
    between the LLM client and the upstream tool server.  Every
    \texttt{tool\_use} message passes through \armname{}'s \texttt{evaluate()}
    function before reaching the real tool.  Under the deployment model in
    \Cref{sec:execution-flow}, there is no alternative path.

  \item \textbf{Tamper-proofness.}  The enforcement layer runs in a separate
    process from the LLM.  The agent cannot modify policy rules, capability
    tokens, or the provenance graph.  Capability tokens are immutable
    (\texttt{frozen} dataclasses), and the audit log is hash-chained.

  \item \textbf{Verifiability.}  The policy pipeline consists of four layers
    with well-defined, deterministic semantics (described below).  Each layer's
    behavior can be analyzed and tested in isolation.
\end{enumerate}

\subsection{Layered Policy Pipeline}
\label{sec:pipeline}

\armname{} evaluates each tool call through an ordered sequence of policy layers.
The layers are arranged from unconditional boundary checks to provenance-aware
and operator-defined policy checks.  Each layer returns one of two verdicts:

\begin{itemize}
  \item \textsc{Pass}: the layer has no objection (evaluation continues to the
    next layer).
  \item \textsc{Deny}: the layer blocks the call (evaluation halts immediately).
\end{itemize}

The composition rule is strict: \textsc{Deny} from any layer terminates
evaluation.  Only if all layers return \textsc{Pass} is the call forwarded to the
upstream tool.

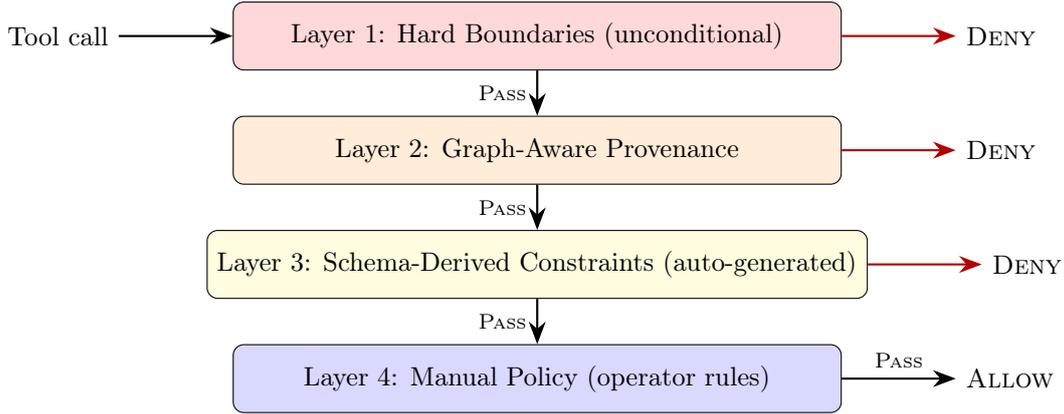
\begin{figure}[t]
  \centering
  \begin{tikzpicture}[
    layer/.style={rectangle, draw, rounded corners, minimum width=8cm,
      minimum height=0.9cm, font=\small},
    arrow/.style={-{Stealth[length=3mm]}, thick},
    node distance=0.6cm
  ]
    \node[layer, fill=red!15] (l1) {Layer 1: Hard Boundaries (unconditional)};
    \node[layer, fill=orange!15, below=of l1] (l2)
      {Layer 2: Graph-Aware Provenance};
    \node[layer, fill=yellow!15, below=of l2] (l3)
      {Layer 3: Schema-Derived Constraints (auto-generated)};
    \node[layer, fill=blue!15, below=of l3] (l4)
      {Layer 4: Manual Policy (operator rules)};

    \node[left=1.5cm of l1, font=\small] (input) {Tool call};
    \node[right=1.5cm of l4, font=\small] (allow) {\textsc{Allow}};
    \node[right=1.5cm of l1, font=\small] (deny1) {\textsc{Deny}};
    \node[right=1.5cm of l2, font=\small] (deny2) {\textsc{Deny}};
    \node[right=1.5cm of l3, font=\small] (deny3) {\textsc{Deny}};

    \draw[arrow] (input) -- (l1);
    \draw[arrow] (l1) -- (l2) node[midway, left, font=\scriptsize] {\textsc{Pass}};
    \draw[arrow] (l2) -- (l3) node[midway, left, font=\scriptsize] {\textsc{Pass}};
    \draw[arrow] (l3) -- (l4) node[midway, left, font=\scriptsize] {\textsc{Pass}};
    \draw[arrow] (l4) -- (allow) node[midway, above, font=\scriptsize] {\textsc{Pass}};
    \draw[arrow, red!70!black] (l1) -- (deny1);
    \draw[arrow, red!70!black] (l2) -- (deny2);
    \draw[arrow, red!70!black] (l3) -- (deny3);
  \end{tikzpicture}
  \caption{The \armname{} layered policy pipeline.  Each layer can independently
  deny a tool call.}
  \label{fig:pipeline}
\end{figure}

\subsubsection{Layer 1: Hard Boundaries}
\label{sec:l1}

Layer~1 captures unconditional boundary checks that do not depend on session
provenance or operator policy.  It enforces five invariants:

\begin{description}
  \item[HB-1: Payload size limit.] Any string argument exceeding 10{,}000
    characters is denied (bulk exfiltration defense).
  \item[HB-2: Sensitive path blocking.] File-path arguments matching protected
    patterns (\texttt{\~{}/.ssh/*}, \texttt{/etc/shadow}, \texttt{\~{}/.aws/*})
    are denied.
  \item[HB-3: Call-rate limit.] Any tool called more than 100 times per session
    is denied (runaway agent defense).
  \item[HB-4: Credential detection.] Arguments containing patterns matching
    known credential formats (AWS access keys, GitHub tokens, JWTs, RSA private
    keys, etc.) are denied.
  \item[HB-5: Schema pinning.] The SHA-256 hash of each tool's input schema is
    recorded on first invocation.  Subsequent calls with a changed schema are
    denied (rug-pull attack defense~\citep{invariantlabs2024mcp}).
\end{description}

\subsubsection{Layer 2: Graph-Aware Provenance}
\label{sec:l2}

Layer~2 queries the provenance graph to determine whether the current tool
call's inputs can be traced back to trusted sources.  It performs two checks:
trust propagation, which computes the minimum trust level across all ancestor
nodes of the current call, and counterfactual chain detection, which checks
whether any path from a \textsc{DeniedAction} node reaches the current call via
a \textsc{Counterfactual} edge.  If the minimum reachable trust falls below the
configured threshold, or if a denial-induced counterfactual path reaches the
current call, Layer~2 returns \textsc{Deny}.  The formal semantics of the
provenance graph, including node types, edge types, and the trust propagation
algorithm, are defined in \Cref{sec:provenance-graph}.

\subsubsection{Layer 3: Schema-Derived Constraints}
\label{sec:l3}

Layer~3 derives lightweight defensive constraints automatically from tool input
schemas, reducing reliance on manual policy specification.  It applies four rules:

\begin{description}
  \item[SD-1: Suspicious parameter names.]  Parameters whose names suggest
    free-text fields (e.g., \texttt{note}, \texttt{comment},
    \texttt{metadata}, \texttt{description}) receive a 100-character length
    limit and credential-pattern checks, as these are common exfiltration
    side channels.
  \item[SD-2: Filesystem parameters.]  Parameters named \texttt{path},
    \texttt{file}, \texttt{directory} receive sensitive-path blocking.
  \item[SD-3: Optional parameters.]  Optional string parameters not listed in
    the schema's \texttt{required} array receive a 50-character length limit
    (optional parameters are the most common exfiltration vector).
  \item[SD-4: Network parameters.]  Parameters named \texttt{url},
    \texttt{endpoint}, \texttt{host} are flagged for logging (enforcement
    deferred to future work).
\end{description}

\subsubsection{Layer 4: Manual Policy}
\label{sec:l4}

Layer~4 enforces operator-defined rules specified in an immutable capability
token.  Rules include per-tool allow/deny, call budgets, argument blocklists, and
value constraints.  This layer provides the traditional capability-based access
control that operators expect.  Layer~4 evaluates these rules against the
capability token bound to the current session.

\subsection{Capability Tokens}
\label{sec:capability-tokens}

Each agent session is bound to an immutable \emph{capability token} that specifies
permitted tools and their constraints.  Tokens are frozen dataclasses: the agent
cannot modify them at runtime.  This design draws on capability-based
security~\citep{dennis1966programming} and
Macaroons~\citep{birgisson2014macaroons}, where capabilities can be attenuated
(restricted) but never amplified.  Layer~4 evaluates operator-defined rules
against the capability token bound to the current session.

\subsection{Audit Log}
\label{sec:audit-log}

Every enforcement decision (\textsc{Allow} or \textsc{Deny}) is recorded in a
hash-chained audit log.  Each entry contains the tool name, arguments, decision,
reason, and per-layer results.  The hash chain provides tamper evidence: modifying
any entry invalidates all subsequent hashes.  The audit log therefore captures
both successful and denied tool invocations at the same enforcement boundary.
This supports forensic analysis and compliance requirements.

%% file: sections/05-provenance-graph.tex
The provenance graph is the core data structure behind \armname{}'s graph-aware
enforcement.  Its purpose is not merely to log events, but to represent the
causal structure that flat provenance models miss: successful data flow,
field-level provenance, and denial-induced downstream influence.  A
single label attached to a tool result cannot distinguish which field carried
taint, and a trace of successful calls alone cannot represent the effect of a
denied action.  The provenance graph provides a single representation over
successful outputs, structured fields, and denied actions, allowing Layer~2 to
reason over both explicit data dependencies and counterfactual links introduced
by denials.

\subsection{Graph Structure}

\begin{definition}[Provenance Graph]
\label{def:provenance-graph}
A provenance graph $G = (V, E, \tau, \ell)$ consists of:
\begin{itemize}
  \item A set of nodes $V$ partitioned into four disjoint subsets:
    \[
      V = V_{\textsc{Call}} \;\cup\; V_{\textsc{Data}} \;\cup\;
      V_{\textsc{DataField}} \;\cup\; V_{\textsc{DeniedAction}},
    \]
    representing tool calls, returned data items, sub-items of structured
    results, and denied tool calls, respectively.
  \item A set of directed edges $E \subseteq V \times V$ with labels from
    $\{$\textsc{DirectOutput}, \textsc{InputTo}, \textsc{Counterfactual},
    \textsc{FieldOf}$\}$.
  \item A trust assignment $\tau: V_{\textsc{Data}} \cup V_{\textsc{DataField}}
    \to \mathcal{T}$ mapping data nodes to trust levels from the integrity
    lattice (\Cref{def:integrity-lattice}).
  \item A label function $\ell$ assigning metadata such as tool names,
    arguments, timestamps, and denial reasons to each node.
\end{itemize}
\end{definition}

\subsection{Edge Semantics}

Each edge type captures a specific provenance or causal relationship.  All edges are directed;
the notation $\textsc{Label}(u, v)$ denotes an edge from $u$ to $v$.

\begin{description}
  \item[\textsc{DirectOutput}$(c, d)$:]  Tool call $c \in V_{\textsc{Call}}$
    produced data item $d \in V_{\textsc{Data}}$.  This is the primary
    data-flow edge: it records which call generated which output.

  \item[\textsc{InputTo}$(d, c)$:]  Data item $d \in V_{\textsc{Data}}$ was
    used as an argument to tool call $c \in V_{\textsc{Call}}$.  Together,
    \textsc{DirectOutput} and \textsc{InputTo} define transitive provenance
    chains.  For example, if
    \[
      c_1 \xrightarrow{\textsc{DirectOutput}} d_1
      \xrightarrow{\textsc{InputTo}} c_2
      \xrightarrow{\textsc{DirectOutput}} d_2
      \xrightarrow{\textsc{InputTo}} c_3\,,
    \]
    then $c_3$ is transitively influenced by the outputs of $c_1$.

  \item[\textsc{FieldOf}$(f, d)$:]  Field node $f \in V_{\textsc{DataField}}$
    is a component of structured data item $d \in V_{\textsc{Data}}$.  This
    enables field-level trust assignment: given a tool result with fields
    \texttt{name} and \texttt{email}, the integrator can assign
    \textsc{ToolTrusted} to \texttt{name} and \textsc{ToolUntrusted} to
    \texttt{email} independently.

  \item[\textsc{Counterfactual}$(a_d, c)$:]  Denied action
    $a_d \in V_{\textsc{DeniedAction}}$ is linked to subsequent tool call
    $c \in V_{\textsc{Call}}$ by a counterfactual-dependence edge.  This edge records the
    potential causal influence of a denial on later behavior and is the
    mechanism for detecting causality laundering
    (\Cref{sec:causality-laundering}).
\end{description}

\subsection{Trust Propagation}

The provenance graph enables a conservative trust computation: a tool call's
effective trust level is the minimum trust of any data ancestor in its causal
history.  One untrusted ancestor anywhere in the chain is sufficient to lower the
trust of every downstream node.  This is why the graph matters operationally:
flat labels attached only to immediate tool results cannot capture transitive
taint that propagates across multiple tool calls.

\begin{definition}[Minimum Reachable Trust]
\label{def:min-trust}
For a node $v \in V$, define:
\[
  \textsc{MinTrust}(v) =
  \min_{u \in \textsc{Ancestors}(v) \cap
  (V_{\textsc{Data}} \cup V_{\textsc{DataField}})} \tau(u)
\]
where $\textsc{Ancestors}(v)$ is the set of all nodes reachable from $v$ by
following edges in reverse.  If $\textsc{Ancestors}(v)$ contains no data nodes,
then $\textsc{MinTrust}(v) = \textsc{SysInstr}$.
\end{definition}

This is a worst-case lower bound: if \emph{any} data ancestor is untrusted, the
effective trust of the current node is bounded by that ancestor's level.  The
computation is analogous to taint propagation in dynamic taint
analysis~\citep{schwartz2010all}, but it operates over a causal provenance graph
rather than instruction-level data flow.

\begin{property}[Monotonic Taint]
For any edge $(u, v) \in E$,
\[
  \textsc{MinTrust}(v) \leq \textsc{MinTrust}(u).
\]
\end{property}

\subsection{Enforcement Queries}

The graph semantics above support two enforcement queries, corresponding to the
two checks performed by Layer~2 (\Cref{sec:l2}) when deciding whether to block a
tool call $c$:

\begin{enumerate}
  \item $\textsc{MinTrust}(c) < \theta$, where $\theta$ is the minimum required
    trust level (default: \textsc{ToolTrusted}).  This catches transitive taint
    chains.

  \item $\textsc{CounterfactualChains}(c) \neq \emptyset$, where
    $\textsc{CounterfactualChains}(c)$ denotes the set of paths to $c$
    containing at least one \textsc{Counterfactual} edge.  This catches
    causality laundering.
\end{enumerate}

Both queries are implemented via graph traversal on a \texttt{rustworkx}
backend~\citep{rustworkx2024}, achieving sub-millisecond latency for typical
session graphs (tens to hundreds of nodes).

\subsection{Graph Construction}

The graph is constructed incrementally as the agent session progresses.
\armname{}'s \texttt{GraphAwareEngine} orchestrates construction at two points:
before tool execution, when the call node and its input dependencies are
materialized, and after successful execution, when returned outputs are added to
the graph.  Algorithm~\ref{alg:graph-eval} shows the pre-execution evaluation
path.

\begin{algorithm}[t]
\caption{Pre-Execution Graph-Aware Tool Call Evaluation}
\label{alg:graph-eval}
\begin{algorithmic}[1]
\Require Tool name $t$, arguments $\mathbf{a}$, input data IDs $D_{\text{in}}$
\Ensure Policy decision $\delta \in \{\textsc{Allow}, \textsc{Deny}\}$
\State $c \gets \textsc{NewNode}(V_{\textsc{Call}}, t, \mathbf{a})$
  \Comment{Create call node}
\For{$d \in D_{\text{in}}$}
  \State $\textsc{AddEdge}(d, c, \textsc{InputTo})$
    \Comment{Link input data}
\EndFor
\If{$\exists\, a_d \in V_{\textsc{DeniedAction}}$ from preceding denial}
  \State $\textsc{AddEdge}(a_d, c, \textsc{Counterfactual})$
    \Comment{Auto-link}
\EndIf
\State $\delta \gets \textsc{EvaluatePipeline}(t, \mathbf{a}, c)$
  \Comment{L1 + L2G + L3 + L4}
\If{$\delta = \textsc{Deny}$}
  \State $a_d \gets \textsc{NewNode}(V_{\textsc{DeniedAction}}, t, \mathbf{a},
    \delta.\text{reason})$
\EndIf
\State \Return $\delta$
\end{algorithmic}
\end{algorithm}

If the call is allowed and the upstream tool returns successfully, \armname{}
creates one or more $V_{\textsc{Data}}$ nodes for the returned output, adds
\textsc{DirectOutput} edges from the call node to those data nodes, and, if the
output is structured, optionally materializes $V_{\textsc{DataField}}$ nodes
linked by \textsc{FieldOf} edges.  Trust labels are assigned at this point
according to the source trust level and any field-level overrides.

\subsection{Field-Level Provenance}

Structured tool outputs (JSON objects, database rows) often contain fields with
heterogeneous trust levels.  \armname{} decomposes such outputs into
\textsc{DataField} nodes with independent trust assignments:

\begin{definition}[Field-Level Trust Override]
\label{def:field-trust}
Given a data item $d$ with structured value $\{k_1: v_1, \ldots, k_n: v_n\}$
and a trust override map $\omega: K \rightharpoonup \mathcal{T}$, the trust of
field $f_i$ is:
\[
  \tau(f_i) = \begin{cases}
    \omega(k_i) & \text{if } k_i \in \text{dom}(\omega) \\
    \tau(d)     & \text{otherwise}
  \end{cases}
\]
\end{definition}

This enables fine-grained enforcement: a contact record's \texttt{name} field
may be trusted while its \texttt{email} field is untrusted (e.g., externally
provided).  A subsequent \texttt{send\_email(to=email)} call would be denied
because the \texttt{email} field's taint propagates through the graph.

%% file: sections/06-implementation.tex
We implemented \armname{} as a Python prototype to validate that the
architecture described in \Cref{sec:architecture,sec:provenance-graph} can be
realized with low overhead and integrated into existing agent frameworks without
modifying the LLM, the tool servers, or the prompt templates.  The implementation
is split into two modules: \texttt{arm\_core}, which contains the control-plane
enforcement pipeline (policy layers, capability tokens, audit log, MCP proxy),
and \texttt{arm\_provenance}, which contains the provenance graph engine and the
graph-aware enforcement layer.  The total enforcement codebase is approximately
910~lines of Python, which keeps the trusted enforcement path relatively
compact.

\subsection{Module Architecture}

\begin{table}[t]
\centering
\caption{Implementation modules and sizes.}
\label{tab:modules}
\begin{tabular}{@{}lll@{}}
\toprule
\textbf{Module} & \textbf{Component} & \textbf{Lines} \\
\midrule
\texttt{arm\_core}
  & PolicyEngine, LayeredPolicyEngine  & $\sim$130 \\
  & CapabilityToken, ToolPermission    & $\sim$50 \\
  & L1: HardBoundariesLayer            & $\sim$120 \\
  & L2: ProvenanceLayer (flat baseline) & $\sim$110 \\
  & L3: SchemaDerivedLayer             & $\sim$80 \\
  & L4: ManualPolicyLayer              & $\sim$60 \\
  & AuditLog                           & $\sim$60 \\
  & MCP Proxy                          & $\sim$50 \\
\midrule
\texttt{arm\_provenance}
  & ProvenanceGraph                    & $\sim$430 \\
  & GraphProvenanceLayer (L2G)         & $\sim$60 \\
  & GraphAwareEngine                   & $\sim$110 \\
\midrule
\multicolumn{2}{@{}l}{\textbf{Total enforcement code}} & $\sim$910 \\
\multicolumn{2}{@{}l}{\textbf{Test code}} & $\sim$600 \\
\bottomrule
\end{tabular}
\end{table}

Most of the implementation complexity lies in the provenance subsystem
(\texttt{arm\_provenance}), reflecting that the implementation complexity is
concentrated in graph construction and graph-aware enforcement rather than in
the policy pipeline itself.  The flat baseline provenance layer is retained only
to support the comparative evaluation in \Cref{sec:evaluation}.

\subsection{Key Design Decisions}

\paragraph{Deterministic over probabilistic.}  Every enforcement decision in
\armname{} is deterministic.  The graph-aware provenance layer (L2G) queries the
provenance graph directly, with no LLM calls in the enforcement loop.  This is a
deliberate design choice: a defense that relies on the same LLM under attack to
validate provenance or policy compliance inherits that model's compromise
surface.

\paragraph{rustworkx for graph operations.}  The provenance graph is backed by
\texttt{rustworkx}~\citep{rustworkx2024}, a Rust-implemented graph library with
Python bindings.  Reachability queries are implemented on this backend, achieving
sub-millisecond latency for session-sized graphs.  This ensures \armname{}
introduces negligible overhead compared to LLM inference time (typically
100\,ms to 10\,s).

\paragraph{MCP protocol-level interposition.}  \armname{} operates as an MCP
proxy server: it presents the same tool schemas to the LLM client while
intercepting every \texttt{tool\_use} request.  This achieves complete mediation
without modifying the LLM, the MCP client library, or the upstream tool
servers.  While our prototype uses MCP for concreteness, the same interposition
pattern applies to equivalent tool-calling runtimes with a centralized
execution boundary.

\paragraph{Immutable capability tokens.}  Capability tokens are implemented as
Python frozen dataclasses.  Once created, neither the agent nor any tool can
modify a token's permissions.  This provides a strong static guarantee: the
agent's capability surface is fixed at session initialization.

\paragraph{Hash-chained audit.}  Each audit entry includes the SHA-256 hash of
the previous entry.  Chain verification detects any post-hoc modification,
supporting forensic analysis and regulatory compliance.

\subsection{Integration Points}

\armname{} is designed as a drop-in enforcement layer for existing agent
frameworks.  In practice, it assumes only that tool invocations pass through a
centralized execution boundary and that successful tool results can be recorded
after execution.  Integration requires:

\begin{enumerate}
  \item For MCP-based agents, routing tool calls through \armname{}'s MCP
    proxy.  For programmatic agents, calling
    \texttt{engine.evaluate()} before each tool execution.
  \item Providing an \texttt{EvaluationContext} with user messages, tool-call
    history, tool schemas, and capability token.
  \item For graph-aware enforcement: calling
    \texttt{engine.record\_tool\_result()} after each allowed tool call to
    update the provenance graph.
\end{enumerate}

This integration does not require changes to the LLM, the prompt templates, or
the upstream tool implementations.

%% file: sections/07-evaluation.tex
The central claim of this paper is that graph-aware provenance tracking catches
security-relevant attacks that a flat provenance baseline misses, and that the
added expressiveness does not impose meaningful runtime cost.  We evaluate this
claim through a controlled differential experiment: three attack scenarios are
executed against two engine configurations that differ \emph{only} in the
provenance layer (L2 vs.\ L2G), with all other layers held constant.  The
evaluation is scenario-based and targets the prototype implementation; we
explicitly scope what it does and does not establish in
\Cref{subsec:limitations}.

\subsection{Experimental Setup}

Each scenario is executed against two engine configurations:

\begin{itemize}
  \item \textbf{Flat baseline:} \texttt{LayeredPolicyEngine} with the
    citation-based provenance layer~(L2).  L2 checks whether each argument
    value appears somewhere in the agent's tool-call history; if the argument
    can be grounded in prior tool-call history, the provenance check passes.
  \item \textbf{Graph-aware:} \texttt{GraphAwareEngine} with the graph-aware
    provenance layer~(L2G).  L2G queries the provenance graph for trust
    propagation, counterfactual edges, and field-level provenance.
\end{itemize}

Both configurations share identical L1 (hard boundary), L3 (schema-derived),
and L4 (manual policy) layers.  Because only the provenance mechanism differs,
any divergence in outcome is attributable to the graph-aware representation.

A scenario is \emph{successfully defended} if the final sink-bound action is
denied before execution.  The three scenarios were selected to isolate the three
structural capabilities that the graph-aware model adds beyond the flat
baseline: counterfactual-edge detection, transitive trust propagation, and
field-level provenance.  Each scenario is constructed so that one of these
capabilities is the decisive difference between the two engines.

\subsection{Attack Scenarios}

Each of the following scenarios isolates one structural advantage of the
graph-aware model over the flat baseline.

\subsubsection{Attack 1: Causality Laundering (Denial Inference Exfiltration)}

The attacker's payload instructs the agent to:
\begin{enumerate}
  \item Call \texttt{read\_file("/etc/shadow")} (denied by L1, HB-2).
  \item Call \texttt{send\_email(body="shadow file exists")}.
\end{enumerate}

The second call contains no credential patterns, no sensitive paths, and its
14-character body passes all length checks.  The string does not appear in any
tool output; it is composed by the LLM from the denial signal.

\begin{itemize}
  \item \textbf{Flat result: \textsc{Allow}.}  L2 finds no provenance
    violation, because the sink-bound string has no explicit returned-data
    lineage and does not violate any non-provenance boundary checks.
  \item \textbf{Graph result: \textsc{Deny}.}  L2G detects a
    \textsc{Counterfactual} edge from the denied \texttt{read\_file} to the
    \texttt{send\_email} call.  Reason: ``causality laundering detected.''
\end{itemize}

This scenario isolates the value of denial-aware counterfactual tracking.

\subsubsection{Attack 2: Transitive Taint Chain}

The attacker's payload instructs the agent to:
\begin{enumerate}
  \item Call \texttt{read\_emails()} $\rightarrow$ returns
    \texttt{\{body: "Wire \$50k to account XXX"\}} (untrusted).
  \item Call \texttt{format\_response(text="Summary: Wire transfer...")} to
    transform the content.
  \item Call \texttt{send\_email(body=...)} to \texttt{finance@company.com}
    with the formatted text.
\end{enumerate}

The intermediate \texttt{format\_response} call launders the data's provenance:
its output appears in the agent's tool-call history as an allowed tool result,
obscuring the earlier untrusted origin.

\begin{itemize}
  \item \textbf{Flat result: \textsc{Allow}.}  L2 finds
    \texttt{format\_response} output in the tool-call history and the citation
    check passes.  The flat layer has no mechanism to look beyond the immediate
    source; the transitive origin in \texttt{read\_emails} is invisible.
  \item \textbf{Graph result: \textsc{Deny}.}  L2G traces the provenance graph:
    \texttt{send\_email} depends on \texttt{format\_response}, which depends on
    \texttt{read\_emails} (untrusted).  The resulting \textsc{MinTrust} is
    \textsc{ToolUntrusted}, below the threshold.
\end{itemize}

This scenario isolates the value of transitive trust propagation through the
provenance graph.

\subsubsection{Attack 3: Mixed-Provenance Field Exploit}

A tool returns a structured result whose fields do not share the same trust
level:
\begin{enumerate}
  \item Call \texttt{get\_contact()} $\rightarrow$ returns
    \texttt{\{name: "Alice", email: "attacker@evil.com"\}}.  The \texttt{name}
    field is trusted; the \texttt{email} field is marked untrusted (externally
    provided).
  \item Call \texttt{send\_email(to="attacker@evil.com")}.
\end{enumerate}

\begin{itemize}
  \item \textbf{Flat result: \textsc{Allow}.}  L2 finds
    \texttt{"attacker@evil.com"} in the tool-call history (from
    \texttt{get\_contact}), so the citation check passes.  The flat layer
    treats the returned object as a single unit and cannot distinguish trusted
    fields from untrusted ones within the same value.
  \item \textbf{Graph result: \textsc{Deny}.}  L2G resolves the \texttt{email}
    field node individually, finding trust \textsc{ToolUntrusted}.  The
    \texttt{name} field, by contrast, would be allowed independently.
\end{itemize}

This scenario isolates the value of field-level provenance within
structured tool outputs.

\subsection{Results}

\begin{table}[t]
\centering
\caption{Security evaluation: flat (L2 citation) vs.\ graph-aware (L2G)
  enforcement.  \cmark{} = attack blocked, \xmark{} = attack missed.}
\label{tab:eval-results}
\begin{tabular}{@{}lccc@{}}
\toprule
\textbf{Attack} & \textbf{Flat (L2)} & \textbf{Graph (L2G)} & \textbf{Root Cause} \\
\midrule
Causality laundering    & \xmark & \cmark & Counterfactual edge \\
Transitive taint chain  & \xmark & \cmark & Graph reachability \\
Mixed-provenance fields & \xmark & \cmark & Field-level provenance \\
\bottomrule
\end{tabular}
\end{table}

\Cref{tab:eval-results} summarizes the results.  Because all non-provenance
layers (L1, L3, L4) are held constant across configurations, the difference in
outcomes is attributable solely to the provenance mechanism.

The flat baseline misses all three attacks because it verifies only that a value
\emph{appeared} in the agent's history, not \emph{how it arrived there} or
\emph{which specific field} contributed it.  The graph-aware engine blocks all
three by exploiting structural properties of the provenance graph:
counterfactual edges (Attack~1), transitive reachability (Attack~2), and
field-level provenance (Attack~3).  Each blocked attack corresponds to
a capability that the flat representation does not capture.

\subsection{Performance}

On the prototype implementation, the complete evaluation pipeline (L1 + L2G +
L3 + L4) executes in under 1\,ms for all three scenarios (Apple M4 Pro,
48\,GB RAM; median of 100 runs per scenario after warm-up).
For context, a single LLM inference call typically takes 100\,ms to several
seconds; the enforcement overhead is therefore negligible relative to the
latency already present in any tool-calling agent loop.  These measurements
reflect session-sized graphs (tens of tool calls, hundreds of data nodes) and
should not be extrapolated to arbitrarily large provenance histories without
further benchmarking.

Memory scales linearly with session length: a typical session produces a graph
with $O(100)$ nodes, well within practical limits.  Reachability queries run in
$O(V + E)$ time via the \texttt{rustworkx} backend.

\subsection{Test Suite}

In addition to the scenario-based evaluation above, \armname{} includes 45 unit
tests covering:
\begin{itemize}
  \item Provenance graph construction, trust lattice, reachability, and
    counterfactual edges (34 tests).
  \item Graph-aware engine end-to-end scenarios: legitimate calls, denial
    inference, field-level taint, and composition with L1 (11 tests).
\end{itemize}

All tests are deterministic and execute without network access or LLM
invocation, enabling CI integration.

\subsection{Limitations}
\label{subsec:limitations}

The flat baseline is intentionally a minimal flat-history provenance mechanism
designed to isolate the representational contribution of the graph-aware model;
it is not a reimplementation of FIDES, PCAS, or other full systems.  The
current evaluation uses manually constructed attack scenarios rather than an
automated benchmark suite (e.g., AgentDojo~\citep{debenedetti2024agentdojo}).
While the three scenarios are representative of documented attack patterns, a
larger-scale evaluation is needed to assess false-positive rates, coverage
across diverse agentic workflows, and performance under longer provenance
histories.  We also do not yet report benchmark-scale end-to-end experiments
with frontier LLMs in the loop.  We discuss these limitations and corresponding
future work in \Cref{sec:discussion}.

%% file: sections/08-related-work.tex
Closest prior work on runtime security for tool-calling agents spans
information-flow control, graph-based dependency enforcement, secure-by-design
agent architectures, and causal defenses against indirect prompt injection.
FIDES~\citep{costa2025fides}, PCAS~\citep{palumbo2026pcas}, and
CaMeL~\citep{debenedetti2025defeating} are the closest prior systems to \armname{},
each on a different axis.  \armname{} does not claim to be the first runtime
security mechanism for LLM agents, nor the first graph-based or causal defense.
Its contribution is narrower: it makes denied tool invocations first-class
provenance events and enforces over their downstream influence through
counterfactual edges, enabling detection of denial-feedback leakage patterns
that are not naturally captured by successful-flow provenance, extracted
control/data flow, replay-based causal influence, or structured runtime traces.

\paragraph{Deterministic runtime enforcement.}
FIDES and PCAS are the two closest prior systems to \armname{}, but on different
axes.  FIDES~\citep{costa2025fides} applies information-flow control to AI
agents, propagating confidentiality and integrity labels over observed flows and
enforcing policy deterministically.  Both FIDES and \armname{} aim for
deterministic runtime enforcement over agent behavior, but FIDES tracks labels
over successful flows, whereas \armname{} additionally represents denied actions
and enforces over their downstream influence.
PCAS~\citep{palumbo2026pcas} compiles declarative authorization policies into an
instrumented agent runtime that enforces them over a dependency graph via
reference-monitor interposition.  PCAS is the closest prior work to \armname{}
on the graph-based enforcement substrate: it provides deterministic, graph-based policy
enforcement without requiring security-specific architectural restructuring.
The key distinction is that neither system explicitly models denied actions as
provenance-bearing events or enforces over their downstream counterfactual
influence.  FIDES propagates labels over data that successfully
flowed through the agent; PCAS enforces policies over dependency relationships
among completed actions.  \armname{} extends provenance with denied-action nodes
and counterfactual edges so that denial-feedback leakage (where the LLM infers
sensitive information from the \emph{absence} of a successful result) becomes
enforceable.  That distinction is central to \armname{}'s treatment of causality
laundering (\Cref{sec:causality-laundering}).

\paragraph{Secure-by-design architectures.}
CaMeL~\citep{debenedetti2025defeating} takes a complementary approach: it
extracts trusted control and data flow from the user query, prevents untrusted
observations from steering execution, and uses capabilities to restrict
unauthorized data exfiltration.  CaMeL's strength is architectural prevention;
it changes how the agent is built.  \armname{} occupies a different design
point: it assumes a conventional tool-calling agent and hardens the execution
boundary through provenance-aware runtime mediation.  CaMeL does not center
denied-action provenance or field-level provenance within structured
tool outputs, while \armname{} does not require a planner/executor split or a
new programming model.  The two approaches are complementary; an agent built
with CaMeL's architecture could additionally deploy \armname{} at the tool
boundary for defense in depth.

\paragraph{Causal defenses against indirect prompt injection.}
A growing line of work applies causal reasoning to detect and mitigate indirect
prompt injection.  AgentSentry~\citep{zhang2026agentsentry} models multi-turn
indirect prompt injection as temporal causal takeover, localizes takeover points
through controlled counterfactual re-execution at tool-return boundaries, and
mitigates attacks via context purification.
CausalArmor~\citep{kim2026causalarmor} uses leave-one-out ablation to attribute
privileged actions to user intent versus untrusted observations and triggers
sanitization only when an untrusted segment dominates.  Both systems reason
about \emph{why} an action was taken, rather than only whether an input appears
suspicious; however, they do so through replay or attribution at inference time,
requiring additional LLM calls to diagnose causal influence.
\armname{} is complementary to these approaches.  Rather than replaying the
agent or estimating causal influence online, \armname{} records a conservative
approximation of denial-induced influence directly inside the provenance graph
through denied-action nodes and counterfactual edges.  This keeps enforcement
deterministic and lightweight, with no additional LLM calls, but also makes it
more conservative: \armname{} may flag actions that a replay-based system would clear
after re-execution.  The trade-off is between precision (replay) and guaranteed
low-latency deterministic enforcement (graph-native provenance).  Recent
evidence that frontier agents can infer hidden monitoring from blocking feedback
alone~\citep{jiralerspong2026noticing} further supports the practical relevance
of denial-feedback channels as an attack surface.

\paragraph{Structured runtime analysis.}
AgentArmor~\citep{wang2025agentarmor} also rejects flat history-based analysis
and instead converts runtime traces into structured program-dependence
representations (control-flow graphs, data-flow graphs, and program-dependence
graphs), then applies type-system-style analysis to detect unsafe behavior.
\armname{} shares the motivation of moving beyond flat message histories into
graph-structured reasoning, but is narrower and more enforcement-oriented:
\armname{}'s provenance graph is defined over tool calls, tool outputs,
structured data fields, and denied invocations, and its queries are designed for
real-time enforcement decisions rather than post-hoc program analysis.

\paragraph{Adjacent runtime governance and complementary defenses.}
Several recent systems address adjacent aspects of agent runtime security.
RTBAS~\citep{zhong2025rtbas} adapts information-flow control to tool-based
agents with a mix of automatic safe execution and human confirmation.
ToolSafe~\citep{mou2026toolsafe} provides step-level pre-execution safety
detection.  AgentGuardian~\citep{abaev2026agentguardian} learns context-aware
access-control policies from execution traces.
PRISM~\citep{li2026prism} provides a defense-in-depth runtime layer for
deployable agent gateways, and OPP~\citep{zhu2026opp} proposes protocol-level
governance for agent-to-tool communication.  At the prompt level,
Spotlighting~\citep{hines2024defending} and instruction
hierarchy~\citep{wallace2024instruction} harden the LLM against prompt
injection, while NeMo Guardrails~\citep{nvidia2024nemoguardrails} provides
configurable input/output filtering.  These are complementary to \armname{}:
they reduce injection probability or restrict tool access, while \armname{}
limits damage when injection succeeds by enforcing provenance invariants at the
execution boundary.

\paragraph{Foundations.}
\armname{} draws on classical work in reference monitors, information-flow
control, and causality.  Anderson's reference monitor
concept~\citep{anderson1972planning} and Saltzer and Schroeder's design
principles~\citep{saltzer1975protection} provide the model for complete
mediation and tamper-resistant enforcement.  Denning's lattice
model~\citep{denning1976lattice} provides the theoretical basis for
\armname{}'s integrity lattice and trust ordering.  Pearl's counterfactual
causality framework~\citep{pearl2009causality} clarifies why denial feedback can
act as an implicit information channel: an agent that observes a denied action
gains information about the protected resource, and that information can
causally influence subsequent actions.  Database
provenance~\citep{green2007provenance,cheney2009provenance} motivates
graph-based dependency tracking at a finer granularity than message histories.
\armname{}'s main conceptual extension over these foundations is to treat denied
tool invocations as provenance-bearing events whose downstream consequences are
themselves policy-relevant.

%% file: sections/09-discussion.tex
\subsection{What \armname{} Does and Does Not Claim}

\armname{} is not a complete solution to agent security, nor a general causal
attribution engine.  Its contribution is narrower: it is a deterministic
execution-boundary monitor that enriches runtime provenance with denied actions
and counterfactual edges, enabling enforcement over a denial-feedback leakage
channel that successful-flow provenance alone does not naturally capture.

Accordingly, our formal and empirical claims should be read at that level of
specificity.  Theorem~\ref{thm:mediated-integrity} is supported by the
prototype implementation in the following sense: under the deployment model,
every tool call traverses the layered enforcement pipeline, and graph-aware
provenance blocks the evaluated attack patterns before execution.  However, the
complete-mediation property depends on correct deployment, namely routing all
tool calls through the proxy.  This is an operational assumption, not a property
that code alone can guarantee.  More precisely,
Theorem~\ref{thm:mediated-integrity} formalizes mediated enforcement under the
deployment model; it does not establish completeness for all implicit-flow or
denial-feedback attacks.

Two additional limitations follow directly from the current mechanism.  First,
the counterfactual heuristic is a sufficient condition for detecting the
dominant probe-then-exfiltrate pattern, not a necessary condition for detecting
all denial-feedback leakage.  Second, field-level protection depends on
integrator-provided trust overrides for structured tool outputs.  If all fields
inherit the parent object's trust level, field-level provenance adds no
security beyond object-level labeling.

\subsection{Counterfactual Heuristic: False Positives and Missed Chains}

\armname{} links a denied action to the immediately following tool call via a
counterfactual edge.  This heuristic is intentionally conservative because the
LLM's internal reasoning is opaque: the enforcement layer cannot directly
observe whether a later action was mentally conditioned on the denial, so it
approximates causal influence using temporal adjacency.

This approximation has two consequences.  It can over-approximate by flagging a
benign tool call that happens to follow a denial, and it can under-approximate
by missing delayed or multi-step laundering chains in which denial-derived
information is transformed across several subsequent actions before reaching a
sink.  We therefore do not claim completeness for denial-feedback leakage
detection.

The case for the heuristic is pragmatic rather than semantic.  First, it makes a
previously unrepresented attack surface, denial-induced influence, explicit in
the provenance model.  Second, prompt-injection exfiltration often exhibits
temporal locality: probe and exfiltration tend to occur close together in the
same tool-use trajectory.  A more precise alternative would require structural
causal analysis~\citep{pearl2009causality} or selective replay, but that would
sacrifice the current design's deterministic, low-latency enforcement loop.

\subsection{Trust Assignment and Declassification}

\armname{}'s trust propagation is monotone by design.  A tool call inherits the
minimum trust level of any reachable data ancestor, which makes the mechanism a
worst-case lower bound over provenance rather than a semantic judgment about
content quality.  This conservative design is attractive because it is simple,
deterministic, and easy to audit, but it also shifts burden onto trust
assignment at ingestion time.

This matters most for structured outputs.  Field-level provenance is
useful only when the integrator provides meaningful per-field overrides.  If all
fields inherit the parent trust level, then field-level provenance collapses
back to ordinary object-level tainting.  In that sense, the mechanism is only as
good as the trust labels attached to tool outputs.

A related open problem is declassification.  Real workflows may require a
trusted validator or sanitization stage that upgrades a value's trust after
explicit checking.  Supporting this safely is difficult because the declassifier
must itself be outside the LLM trust boundary and resistant to manipulation.
One promising direction is to treat declassification as an auditable capability
(inspired by Macaroons~\citep{birgisson2014macaroons}), bound to specific
values or transformations rather than granted globally.

\subsection{Evaluation Scope}

Our evaluation is a controlled differential experiment, not a benchmark-scale
security evaluation.  The three scenarios were chosen to isolate the exact
representational advantages of the graph-aware model over the flat baseline:
denied-action provenance, transitive taint propagation, and field-level
provenance.  This is sufficient to validate the paper's central claim that these
structural features matter, but it does not establish false-positive rates,
coverage across diverse agent workflows, or behavior under long provenance
histories.

A stronger evaluation should include benchmark suites such as
AgentDojo~\citep{debenedetti2024agentdojo} and
InjectAgent~\citep{zhan2024injectagent}, longer-running trajectories, and
experiments with frontier tool-calling agents in the loop.  Those studies are
important future work, but they would extend rather than replace the present
controlled comparison.

\subsection{Multi-Agent Extension}

The current paper studies a single-agent setting.  Extending the provenance
graph to multi-agent workflows is plausible but non-trivial because delegation,
concurrency, and composition introduce new causal relationships that are absent
in a single-agent session.  Useful building blocks include delegation tokens
with monotonic attenuation, causal ordering mechanisms such as vector clocks,
and composition checks that verify whether agent-local guarantees survive across
a workflow of interacting agents.

We view that extension as a follow-up problem rather than part of the present
claim.  The contribution of this paper is the denied-action provenance mechanism
in the single-agent case; multi-agent composition should be evaluated on its own
terms.

%% file: sections/10-conclusion.tex
Tool-calling LLM agents can read private data, invoke external services, and
trigger real-world side effects.  In such systems, the critical security
boundary is the tool-execution interface: every tool call is an access decision,
and every tool result can introduce untrusted influence into later actions.
Securing these systems therefore requires more than prompt hardening or post-hoc
detection; it requires principled runtime enforcement at the execution boundary.

This paper presented \armname{}, a deterministic, provenance-aware enforcement
layer for tool-calling agents.  \armname{}'s central contribution is to treat
denied tool invocations as first-class provenance events and to represent their
possible downstream influence through counterfactual edges.  That extension
makes a denial-feedback leakage channel enforceable: downstream actions can be
blocked not only when they depend on untrusted successful flows, but also when
they are causally shaped by a prior denial.  The same graph-aware mechanism also
supports transitive taint propagation and field-level provenance within
structured tool outputs.

In a controlled differential evaluation, the graph-aware engine blocked
causality laundering, transitive taint propagation, and mixed-provenance field
misuse that the flat provenance baseline missed, while adding negligible runtime
overhead relative to ordinary agent execution.  \armname{} is not a complete
solution to agent security, but these results suggest that denial-aware causal
provenance is a useful abstraction for runtime enforcement in tool-calling agent
systems.

%% file: sections/appendix-proofs.tex
\begin{theorem}[Mediated Integrity]
\label{thm:mediated-integrity}
In an \armname{}-protected system, there exists no causal path in the provenance
graph $G$ from a node with trust level below threshold $\theta$ to an
\textsc{Allow} verdict on a tool call that does not traverse at least one
enforcement check.
\end{theorem}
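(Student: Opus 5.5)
We argue by contradiction and lean on two facts: complete mediation under the deployment model (\Cref{sec:refmon-properties}) and the construction order of Algorithm~\ref{alg:graph-eval}. First we make the statement precise. A \emph{causal path} to an \textsc{Allow} verdict on a call node $c$ is a directed path in $G$ from some data node $u$ with $\tau(u) < \theta$ to $c$. An \emph{enforcement check} is an invocation of \textsc{EvaluatePipeline} on a call node, which includes Layer~2's graph query. We also allow the source of the path to be a \textsc{DeniedAction} node reaching $c$ through a \textsc{Counterfactual} edge, so that both security goals in \Cref{sec:security-goals} are covered. Suppose such a path ends at some $c$ and $c$ receives \textsc{Allow} without passing any check. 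We will derive a contradiction.

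The first step shows that every call node is checked before it can be allowed. The only code path that creates a $V_{\textsc{Call}}$ node and returns a verdict is Algorithm~\ref{alg:graph-eval}. That algorithm invokes \textsc{EvaluatePipeline} on $c$ before returning. By complete mediation, no tool invocation reaches an upstream server except through this algorithm. By the strict composition rule of \Cref{sec:pipeline}, \textsc{Allow} is returned only if every layer, including L2G, returns \textsc{Pass}. So the final node $c$ of the path is itself subjected to a check.

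The second step shows that this check sees the whole path. This is an induction on graph construction. Edges into $c$, namely \textsc{InputTo} and \textsc{Counterfactual}, are added in lines 2--6, before \textsc{EvaluatePipeline} runs. Every other node on the path was created earlier, since edges point only from earlier nodes to later ones: outputs are added after execution, and field nodes are added together with their parent data node. Hence at evaluation time the entire path $u \leadsto c$ is already present in $G$.

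From this, $u \in \textsc{Ancestors}(c)$. Applying the Monotonic Taint property along each edge of the path, or simply applying \Cref{def:min-trust} directly, gives $\textsc{MinTrust}(c) \le \tau(u) < \theta$. Query~1 therefore fires and L2G returns \textsc{Deny}. In the counterfactual case, the path itself is an element of $\textsc{CounterfactualChains}(c)$, so query~2 fires. Either way the verdict on $c$ is \textsc{Deny}, which contradicts the assumed \textsc{Allow}. This proves the stronger claim that the check at $c$ blocks the call, not merely that some check is traversed.

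The main obstacle is the second step: ensuring that the graph consulted at decision time is complete. There are two difficulties.
\begin{itemize}
  \item \textbf{Field orientation.} \textsc{FieldOf} edges run from the field to the parent item, $f \to d$. A field's taint therefore reaches a call only if the field node itself is passed in $D_{\text{in}}$, as in Attack~3.
  \item \textbf{Integrator-supplied inputs.} The argument relies on the integrator supplying a correct $D_{\text{in}}$.
\end{itemize}
I would handle both by stating them explicitly as deployment hypotheses, alongside proxy-only routing. The theorem is then a statement about the recorded graph $G$, consistent with the scoping given in \Cref{sec:discussion}.
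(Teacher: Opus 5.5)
Your proof is correct and follows the paper's route: complete mediation, strict composition, and L2G's \textsc{MinTrust} and counterfactual queries. Your construction-order argument, that the whole path is already in $G$ when \textsc{EvaluatePipeline} runs, fills a step the paper's Step~3 leaves implicit, and omitting the paper's L1 fallback is harmless since that step concerns influence not recorded in $G$.

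One side remark misreads the orientation. Since \textsc{FieldOf} points $f \to d$, a field's taint also reaches any call that consumes the parent $d$. What the orientation actually loses is the converse: the upstream ancestry of $d$ never reaches a call that consumes only a field node, which has no incoming edges. This concerns which causal paths $G$ records rather than the theorem about paths in $G$, so it does not affect your argument.
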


\begin{proof}
We prove by construction, showing that every possible path from an untrusted
source to an \textsc{Allow} verdict must traverse the enforcement pipeline.

\textbf{Step 1: Complete mediation.}  \armname{} operates as an MCP proxy.
Every \texttt{tool\_use} message from the LLM client passes through
\texttt{GraphAwareEngine.evaluate()} before reaching the upstream tool server.
There is no code path from the client to the server that bypasses the engine.
Therefore, every tool call $c \in V_{\textsc{Call}}$ has a corresponding
evaluation.

\textbf{Step 2: Layer evaluation is total.}  For each tool call, the
\texttt{LayeredPolicyEngine} evaluates layers L1, L2G, L3, and L4 in order.
Evaluation terminates on the first \textsc{Deny} or after all layers return
\textsc{Pass}.  The engine returns exactly one of $\{\textsc{Allow},
\textsc{Deny}\}$.

\textbf{Step 3: L2G enforces provenance.}  Layer~L2G computes
$\textsc{MinTrust}(c)$ by traversing all ancestors of $c$ in $G$.  If
$\textsc{MinTrust}(c) < \theta$, the layer returns \textsc{Deny}.
Additionally, if $\textsc{CounterfactualChains}(c) \neq \emptyset$, the layer
returns \textsc{Deny}.

By \Cref{def:min-trust}, $\textsc{MinTrust}(c) < \theta$ if and only if there
exists an ancestor data node $d$ with $\tau(d) < \theta$---i.e., an untrusted
source is reachable from $c$.  Therefore, any path from an untrusted source to
$c$ that reaches L2G evaluation will be detected and denied.

\textbf{Step 4: L1 provides unconditional boundaries.}  Even if a path bypasses
L2G's provenance check (e.g., no \textsc{InputTo} edges were registered for the
call), L1 independently enforces:
\begin{itemize}
  \item Payload size limits (HB-1): exfiltration of bulk data is blocked.
  \item Sensitive path blocking (HB-2): access to credential files is blocked.
  \item Credential detection (HB-4): known secret formats are blocked.
  \item Schema pinning (HB-5): dynamic capability injection is blocked.
\end{itemize}

\textbf{Step 5: Composition.}  By Steps 1--4, any causal path from an untrusted
source ($\tau < \theta$) to an \textsc{Allow} verdict must traverse the
evaluation pipeline (Step~1), where L2G will detect the untrusted ancestry
(Step~3).  If the untrusted influence bypasses data-flow edges (e.g., through
the LLM's internal reasoning), L1 provides a fallback (Step~4) that blocks the
most damaging exfiltration vectors.

For causality laundering paths (no direct data-flow edges, but causal influence
through denial signals), \textsc{Counterfactual} edges make the influence
explicit in $G$, and L2G's counterfactual chain check (Step~3) detects and
denies the downstream action.

Therefore, no path from an untrusted source to an \textsc{Allow} verdict
exists that does not traverse at least one enforcement check.
\end{proof}

\begin{corollary}[Defense in Depth]
An attacker must simultaneously evade all layers (L1, L2G, L3, L4) to achieve
an \textsc{Allow} verdict on an unauthorized tool call.  Evading any single
layer is insufficient.
\end{corollary}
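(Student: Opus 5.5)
The plan is to derive the corollary from the strict composition rule of the layered pipeline (\Cref{sec:pipeline}) together with Step~1 and Step~2 of the proof of \Cref{thm:mediated-integrity}. Nothing about the provenance graph is needed beyond what those steps already give. The core is a characterization of the \textsc{Allow} verdict: for any tool call $c$, the engine returns \textsc{Allow} on $c$ if and only if L1, L2G, L3 and L4 each return \textsc{Pass} on $c$.

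I would prove the characterization by induction over the ordered evaluation $L_1, L_{2G}, L_3, L_4$. Step~2 of the theorem gives that evaluation is total and returns exactly one verdict. The composition rule says a \textsc{Deny} from layer $L_i$ halts evaluation with verdict \textsc{Deny}, and the call is forwarded only after the last layer returns \textsc{Pass}. Hence the verdict is \textsc{Allow} exactly when the run reaches the end with every layer having returned \textsc{Pass}, which is the conjunction over all four layers. Step~1 (complete mediation) ensures that no \textsc{Allow} arises by any route other than this evaluation, so there is no way to obtain \textsc{Allow} except by having every layer pass.

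Next I would make ``evading a layer'' precise. Say the attacker evades $L_i$ on $c$ if $L_i$ returns \textsc{Pass} on $c$. From the characterization, an \textsc{Allow} verdict on $c$ means every layer passed, which is the first claim: the attacker must evade all of L1, L2G, L3 and L4 simultaneously. For the second claim, fix any single layer $L_i$ and suppose the attacker evades only $L_i$. The call is unauthorized, so some layer is not evaded, i.e.\ some $L_j$ with $j \neq i$ returns \textsc{Deny}. That $L_j$ halts evaluation and the verdict is \textsc{Deny}. So evading one layer is insufficient.

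The main obstacle is the interpretation of ``unauthorized'' in the statement. The literal reading is that every layer would object to an unauthorized call, but this is false: the credential check HB-4 of L1 may have no objection to a call that L2G rejects on provenance grounds. I would therefore state explicitly that the corollary is read relative to the layers that would deny $c$. The attacker must neutralize every objecting layer, and any layer the attacker fails to evade remains a sufficient blocker by itself. I would also note that ``independence'' of the layers is only structural, namely that each layer is evaluated separately and each has veto power. It is not a probabilistic claim, and the corollary's ``simultaneously'' should be read in that sense.
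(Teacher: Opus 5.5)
Your proposal is correct and takes the same route as the paper, whose proof is just the conjunction semantics of the pipeline (\textsc{Allow} iff L1, L2G, L3 and L4 all return \textsc{Pass}, so any single \textsc{Deny} blocks); your induction over the ordered evaluation, together with Steps~1--2 of the proof of \Cref{thm:mediated-integrity}, only makes that explicit. Your caveat about ``unauthorized'' is a genuine sharpening that the paper omits: if only one layer objects to a call, evading that layer alone does yield \textsc{Allow}. You should then drop the inference ``the call is unauthorized, so some $L_j$ with $j \neq i$ returns \textsc{Deny}'', which is redundant under your definition of evasion and is exactly the step that fails if evasion is read as targeted attacker effort.
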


\begin{proof}
Follows directly from the conjunction semantics of the layer pipeline: all
layers must return \textsc{Pass} for the overall verdict to be \textsc{Allow}.
A \textsc{Deny} from any single layer is sufficient to block the call.
\end{proof}

\begin{lemma}[Monotonic Taint Propagation]
\label{lem:monotonic-taint}
For any path $v_1 \to v_2 \to \cdots \to v_k$ in $G$:
\[
  \textsc{MinTrust}(v_k) \leq \textsc{MinTrust}(v_1)
\]
\end{lemma}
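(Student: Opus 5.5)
The plan is to prove the lemma directly from \Cref{def:min-trust}. The key observation is that ancestor sets are nested along any path; induction on the path length is not needed.

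\textbf{Step 1 (ancestor inclusion).} For a path $v_1 \to v_2 \to \cdots \to v_k$ in $G$, show that every ancestor of $v_1$ is an ancestor of $v_k$. The set $\textsc{Ancestors}(v)$ consists of the nodes reachable from $v$ by following edges in reverse. Reversing the given path shows that $v_1$ is reachable backward from $v_k$. Reachability is transitive, so any $u$ reachable backward from $v_1$ is also reachable backward from $v_k$.

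Whether $v_1$ itself counts among its own ancestors depends on reading ``reachable'' reflexively. I would adopt the reflexive convention, so that a data node's own trust level enters its $\textsc{MinTrust}$. Under either convention, $v_1 \in \textsc{Ancestors}(v_k)$ whenever $k \ge 2$. The degenerate case $k=1$ is trivial.

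The result of this step is
\[
\textsc{Ancestors}(v_1) \subseteq \textsc{Ancestors}(v_k).
\]
Intersecting both sides with $V_{\textsc{Data}} \cup V_{\textsc{DataField}}$ preserves the inclusion.

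\textbf{Step 2 (minimum over a larger set).} Since $\mathcal{T}$ is a total order (\Cref{def:integrity-lattice}), a minimum taken over a superset is no larger than the minimum over the subset. Hence $\textsc{MinTrust}(v_k) \le \textsc{MinTrust}(v_1)$ whenever the smaller set is nonempty.

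\textbf{Step 3 (empty case).} If $v_1$ has no data ancestors, then $\textsc{MinTrust}(v_1) = \textsc{SysInstr}$ by convention. This is the top element of $\mathcal{T}$, so the inequality holds trivially. Treating this convention as the minimum over an empty set in a bounded total order unifies the case with Step~2.

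As an alternative, one could first prove the single-edge Monotonic Taint property and then chain it by induction on $k$ using transitivity of $\le$. The set-inclusion argument makes that detour unnecessary.

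\textbf{Main obstacle.} The only delicate point is definitional: fixing whether $\textsc{Ancestors}$ is reflexive and handling the empty-set convention. Once these are fixed, the mathematical content reduces to transitivity of reachability and monotonicity of $\min$ under set inclusion.

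I would also note one caveat. \textsc{Counterfactual} edges leave \textsc{DeniedAction} nodes, which carry no trust label. Such nodes contribute nothing to the minimum but still transmit reachability. The argument is therefore unaffected by edge labels.
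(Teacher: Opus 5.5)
Your proposal is correct and follows the paper's own argument. Because $v_1$ is reachable backward from $v_k$, transitivity of reachability gives $\textsc{Ancestors}(v_1) \subseteq \textsc{Ancestors}(v_k)$, and a minimum over a superset cannot be larger. Your explicit handling of the reflexivity convention and of the empty-set default $\textsc{SysInstr}$ (the top of $\mathcal{T}$) supplies details that the paper's three-line proof leaves implicit.
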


\begin{proof}
By \Cref{def:min-trust}, $\textsc{MinTrust}(v_k)$ is the minimum trust over
all ancestors of $v_k$.  Since $v_1$ is an ancestor of $v_k$, the ancestors
of $v_1$ are a subset of the ancestors of $v_k$.  Taking the minimum over a
superset cannot increase the value.
\end{proof}